\newcommand{\couic}[1]{}
\newcommand{\couicfootnote}[1]{}
\newcommand{\couicefootnote}[1]{}
\newcommand{\eg}{e.g.}
\newcommand{\ie}{i.e.}
\def\gaugecol{red}
\newcommand{\gfhei}{0.4}
\newcommand{\gfwid}{0.6}
\newcommand{\ovalTikz}[6]{ %(width of the oval, height, left side particle ? ,right side particle ?, left, bottom)
\begin{scope}
  \ifthenelse{#3 > 0 \AND #4>0}{\def\mycolc{white}}{\def\mycolc{\gaugecol}};
  \draw[color=\mycolc, thick] (#5 ,#6) ellipse (#1 and #2);
  \clip (#5 ,#6) ellipse (#1 and #2);
  
  \ifthenelse{#3 > 0}{\def\mycola{\gaugecol}}{\def\mycola{white}};
  \fill[color=\mycola] (#5 - #1 ,#6 - #2) rectangle (#5 ,#6 + #2);
  \ifthenelse{#4 > 0}{\def\mycolb{\gaugecol}}{\def\mycolb{white}};
  \fill[color=\mycolb] (#5 + #1 ,#6 - #2) rectangle (#5 ,#6 + #2);
  
  \draw[color=\mycolc, thick] (#5 ,#6 - #2) -- (#5 ,#6 + #2);
\end{scope}
}
\newcommand{\stateTikz}[6]{%(left,bottom, width/2, height, left side particle ?, right side particle ?
  \ifthenelse{#5 > 0 \AND #6>0}{\def\mycola{white}}{\def\mycola{black}};
  \ifthenelse{#5 > 0}{\def\mycolb{black}}{\def\mycolb{white}};
  \ifthenelse{#6 > 0}{\def\mycolc{black}}{\def\mycolc{white}};
  \filldraw[color=\mycola, fill=\mycolb, thick](#1- #3 , #2) rectangle (#1 , #2 + #4);
  \filldraw[color=\mycola, fill=\mycolc, thick](#1, #2) rectangle (#1 + #3 , #2 + #4);
}
\title{A gauge-invariant reversible cellular automaton}
\author{Pablo Arrighi\quad Giuseppe Di Molfetta
\institute{Aix Marseille Univ, Universite de Toulon, CNRS,\\ LIS, Marseille, France}
\email{\{pablo.arrighi,giuseppe.di-molfetta\}@univ-amu.fr}
\and
Nathana\"el Eon
\institute{Aix Marseille Univ, Universite de Toulon, CNRS,\\ LIS, Marseille, France --- \'Ecole Centrale}
\email{nathanael.eon@centrale-marseille.fr}
}
\def\titlerunning{A gauge-invariant RCA}
\def\authorrunning{P. Arrighi, G. Di Molfetta, N. Eon}
\author{Pablo Arrighi\inst{1,2} \and Giuseppe Di Molfetta\inst{1,3} \and Nathana\"el Eon\inst{1,4}}
\institute{Aix-Marseille Univ, Universit\'e de Toulon, CNRS, LIS, Marseille, France
\and
IXXI, Lyon, France
\and
Departamento de F{\'{i}}sica Te{ó}rica and IFIC, Universidad de Valencia-CSIC, Dr. Moliner 50, 46100-Burjassot, Spain
\and 
\'Ecole Centrale, France
}
\titlerunning{A gauge-invariant RCA}
\authorrunning{P. Arrighi, G. Di Molfetta, N. Eon}
\begin{document}
\pagestyle{headings}  
\tikzstyle{losange} = [diamond, draw, text badly centered, inner sep=0.2cm, aspect=2]
\maketitle

\begin{abstract}
Gauge-invariance is a fundamental concept in physics---known to provide mathematical justifications for the fundamental forces. In this paper, we provide discrete counterparts to the main gauge theoretical concepts, directly in terms of Cellular Automata. More precisely, we describe a step-by-step gauging procedure to enforce local symmetries upon a given Cellular Automaton. We apply it to a simple Reversible Cellular Automaton for concreteness. From a Computer Science perspective, discretized gauge theories may be of use in numerical analysis, quantum simulation, fault-tolerant (quantum) computation. From a mathematical perspective, discreteness provides a simple yet rigorous route straight to the core concepts.
%\keywords{reversible cellular automata, gauge theory, error correction}
\end{abstract}

\section{Introduction}

In Physics, symmetries act as guiding principles towards discovering the laws we put forward to model nature. Among them, Gauge symmetries are absolutely central, as they provide mathematical justifications for all four fundamental forces: electromagnetism and gravity (long range interactions), weak and strong forces (short range interactions) \cite{quigg2013gauge}. In this paper we express the key notions of gauge theories natively in Computer Science friendly, Discrete Mathematics terms---we do so in order to make them available to these disciplines, and in order to clarify its concepts. More precisely, we describe a discrete counterpart to the gauging procedure. I.e. we provide a step-by-step procedure to enforce local symmetries within Cellular Automata. 

These methods may lead to natural, physics-inspired CA. More importantly, the fields of numerical analysis, quantum simulation, digital physics are constantly looking for discrete schemes that simulate known physics \cite{georgescu2014quantum}. Quite often, these discrete schemes seek to retain the symmetries of the simulated physics; whether in order to justify the discrete scheme as legitimate, or in order to do the Monte Carlo-counting right \cite{hastings1970monte}. Generally speaking, since gauge symmetries are essential in physics, having a discrete counterpart of it may also be. 

Interestingly, this way of enforcing local redundancies also bears some resemblances with error-correction, and echoes the fascinating question of noise resistance within spatially--distributed models of computation \cite{harao1975fault,Toom}, as was pointed out in the context of quantum computation in \cite{kitaev2003fault,nayak2008non}.

Although we authors come from the field of quantum computation and simulation, the formalism we use is totally devoid of any quantum theory, least action principle, nor Lagrangian. The notions here are directly formulated in terms of the discrete dynamical system. We believe that this provides a uniquely direct route to the root concepts. This discrete mathematics framework makes the presentation original, and simpler. But it also allows for more rigorous definitions, that in turn allow us to prove some essential consistency lemmas that are usually left aside. Our running example provides what seems to be the simplest non-trivial Gauge theory so far and illustrates the key concepts. Given the fame of Gauge theories, we think this may be a remarkable pedagogical  asset. 

The paper is organized as follows. In Sec. \ref{sec:gaugeinvariance} we introduce the notions of local transformations which define the desired symmetry, and of gauge-invariance which captures the (non-)compliance of a given Cellular Automaton (CA) with the desired symmetry. In Sec. \ref{sec:gaugefield} we show how a non-gauge-invariant CA can be made gauge-invariant, at the heavy cost of becoming spacetime dependent upon an external parameter, referred to as the gauge field. This new parameter not only implements the symmetry---it leads to new behaviours for the CA. In Sec. \ref{sec:gaugedynamics} the gauge field gets internalized into the configuration space, and a whole family of homogeneous gauge-invariant CA is obtained, leading us to the notions of gauge-fixing and gauge-constraining. A simple Reversible Cellular Automaton (RCA) is used to illustrate each concepts, throughout the paper. In Sec. \ref{sec:conclusion} we summarize, provide related works and perspectives. 

\section{The gauge-invariance requirements}\label{sec:gaugeinvariance}

\paragraph{Theory to be gauged.} In this paper `theories' stand for CA. As our running example, we pick possibly the simplest and most natural physics-like RCA : one that has particles moving left and right. More precisely, each cell of the RCA has a state in $\Sigma= \{ \square\hspace{-0.1em}\square,
\square\hspace{-0.1em}\blacksquare,
\blacksquare\hspace{-0.15em}\square,
\blacksquare\hspace{-0.1em}\blacksquare \} \cong \{ 00 , 01, 10, 11 \}$. 
Its dynamics $R$ is defined through a local rule $\lambda_R$ which computes the next state of a cell from that of its left and right neighbours, i.e. $\psi(x,t+1)=\lambda_R(\psi(x-1,t),\psi(x+1,t))$, with $\psi(x,t)$ the state of cell $x$ at time $t$. A spacetime diagram $\psi:\mathbb{Z}^2\rightarrow\Sigma$, is said to be $R$--valid if and only if it is produced by applying $\lambda_R$, see for instance Fig-\ref{fig:spacetimediagram}. We also say that it is `a solution'. We use the shorthand notations $(R\psi)(x,t)$ for $\psi(x,t+1)$, $\psi(x)$ for $\psi(x,0)$, $\psi(.,t)$ for the function mapping $x$ into $\psi(x,t)$.\\
In our running example, the $R$ that we consider can be expressed in the block circuit form of Fig-\ref{fig:framework}, with $W$ the gate that swaps two bits:
\begin{align*}
\psi(x,t+1)&=\lambda_R(\psi(x-1,t),\psi(x+1,t))\\
&=W(\psi^+(x-1,t)\otimes\psi^-(x+1,t))\\
&=\psi^-(x+1,t)\otimes\psi^+(x-1,t)
\end{align*}
with $\psi(x,t)=(\psi^-(x)\otimes\psi^+(x))$. RCA presented in such a block circuit form are often referred to as (Margolus--)Partitioned CA in Computer Science vocabulary\cite{ToffoliMargolusModelling}, or as Lattice-gas automata in Physics\cite{wolf2004lattice}. This theory is {\em to-be-gauged}. This means that although it may have a global symmetry (here the CA has global black/white--symmetry, see Fig-\ref{fig:bwsym1} $(a)-(b)$), it lacks a certain local symmetry (here no deterministic CA describes Fig-\ref{fig:bwsym1} $(c)$). The aim of the so-called {\em gauging procedure} is to extend a theory order so as to enforce a given local symmetry.
%QUESTION: Is it necessary to have the global symmetry before throwing the gauging procedure? 
%ANSWER-Nathanaël : I think so, otherwise we wouldn't even consider doing anything.
\begin{figure}[ht]
\begin{subfigure}[t]{0.4\textwidth}
  \centering
  \Large
  \resizebox{\textwidth}{!}{\begin{tikzpicture}

% links

\draw[color=gray] (5.5,0) -- (12.5,7);
\draw[color=gray] (2.5,1) -- (8.5,7);
\draw[color=gray] (10,0.5) -- (14,4.5);
\draw[color=gray] (2,4.5) -- (4.5,7);

\draw[color=gray] (4,0.5) -- (2.5,2);
\draw[color=gray] (14,2.5) -- (9.5,7);
\draw[color=gray] (8,0.5) -- (2.5,6);
\draw[color=gray] (12,0.5) -- (5.5,7);

\node at (3, 1.5) [draw,scale=1,aspect=1,diamond,color=gray, fill=white]{}; 
\node at (7, 1.5) [draw,scale=1,aspect=1,diamond,color=gray, fill=white]{}; 
\node at (11, 1.5) [draw,scale=1,aspect=1,diamond,color=gray, fill=white]{}; 

\node at (5, 3.5) [draw,scale=1,aspect=1,diamond,color=gray, fill=white]{}; 
\node at (9, 3.5) [draw,scale=1,aspect=1,diamond,color=gray, fill=white]{}; 
\node at (13, 3.5) [draw,scale=1,aspect=1,diamond,color=gray, fill=white]{}; 

\node at (3, 5.5) [draw,scale=1,aspect=1,diamond,color=gray, fill=white]{}; 
\node at (7, 5.5) [draw,scale=1,aspect=1,diamond,color=gray, fill=white]{}; 
\node at (11, 5.5) [draw,scale=1,aspect=1,diamond,color=gray, fill=white]{};

% t
  \filldraw[color=black, fill=white, thick](4,0) rectangle (5,1);
  \filldraw[color=black, fill=black, thick](5,0) rectangle (6,1);
  \filldraw[color=black, fill=white, thick](8,0) rectangle (9,1);
  \filldraw[color=black, fill=white, thick](9,0) rectangle (10,1);
  \filldraw[color=black, fill=black, thick](12,0) rectangle (13,1);
  \filldraw[color=black, fill=white, thick](13,0) rectangle (14,1);

% t+1 
  \filldraw[color=black, fill=white, thick](2,2) rectangle (3,3);
  \filldraw[color=black, fill=white, thick](3,2) rectangle (4,3);
  \filldraw[color=black, fill=white, thick](6,2) rectangle (7,3);
  \filldraw[color=black, fill=black, thick](7,2) rectangle (8,3);
  \filldraw[color=black, fill=black, thick](10,2) rectangle (11,3);
  \filldraw[color=black, fill=white, thick](11,2) rectangle (12,3);
  
% t+2
  \filldraw[color=black, fill=white, thick](4,4) rectangle (5,5);
  \filldraw[color=black, fill=white, thick](5,4) rectangle (6,5);
  \filldraw[color=white, fill=black, thick](8,4) rectangle (9,5);
  \filldraw[color=white, fill=black, thick](9,4) rectangle (10,5);
  \filldraw[color=black, fill=white, thick](12,4) rectangle (13,5);
  \filldraw[color=black, fill=white, thick](13,4) rectangle (14,5);

% t+3
  \filldraw[color=black, fill=white, thick](2,6) rectangle (3,7);
  \filldraw[color=black, fill=white, thick](3,6) rectangle (4,7);
  \filldraw[color=black, fill=black, thick](6,6) rectangle (7,7);
  \filldraw[color=black, fill=white, thick](7,6) rectangle (8,7);
  \filldraw[color=black, fill=white, thick](10,6) rectangle (11,7);
  \filldraw[color=black, fill=black, thick](11,6) rectangle (12,7);

% Writings 

\draw[color=black] (1, 0.5) node {t};
\draw[color=black] (1, 2.5) node {t+1};
\draw[color=black] (1, 4.5) node {t+2};
\draw[color=black] (1, 6.5) node {t+3};

\draw[color=black] (3, -1) node {x-2};
\draw[color=black] (5, -1) node {x-1};
\draw[color=black] (7, -1) node {x};
\draw[color=black] (9, -1) node {x+1};
\draw[color=black] (11, -1) node {x+2};
\draw[color=black] (13, -1) node {x+3};

\end{tikzpicture}}
  \caption{\label{fig:spacetimediagram}An $R$--valid spacetime diagram with two particles moving in opposite directions}
\end{subfigure}\hfill
\begin{subfigure}[t]{0.4\textwidth}
  \centering
  \resizebox{\textwidth}{!}{\begin{tikzpicture}

%\tikzstyle{losange} = [diamond, draw, text badly centered, inner sep=0.2cm, aspect=2];

\draw[color=black, thick] (4.5,1) -- (1,4.5);
\draw[color=black, thick] (1.5,1) -- (5,4.5);

\draw[color=gray, thick] (1,4.5) -- (0,3.5);
\draw[color=gray, thick] (5,4.5) -- (6,3.5);
\draw[color=gray, thick] (0,1.5) -- (0.5,1);
\draw[color=gray, thick] (6,1.5) -- (5.5,1);

%\draw[color=black, dashed] (2,3.5) -- (4,3.5);
\stateTikz{3}{3.2}{1}{0.5}{0}{0};
\draw (3,4) node {$\psi(x,t+1)$};
\draw (3.5,3.45) node {$\psi^+$};
\draw (2.5,3.45) node {$\psi^-$};
%\draw[color=black, dashed] (0,1.2) -- (2,1.2);
\stateTikz{1}{1.2}{1}{0.5}{0}{0};
\draw (1,1.9) node {$\psi(x-1,t)$};
%\draw[color=black, dashed] (4,1.2) -- (6,1.2);
\stateTikz{5}{1.2}{1}{0.5}{0}{0};
\draw (5,1.9) node {$\psi(x+1,t)$};

\node at (3, 2.5) [draw,scale=1,aspect=1,diamond,color=black, fill=white]{W}; 
\node at (1, 4.5) [draw,scale=1,aspect=1,diamond,color=gray, fill=white]{W}; 
\node at (5, 4.5) [draw,scale=1,aspect=1,diamond,color=gray, fill=white]{W}; 
\end{tikzpicture}}
  \caption{\label{fig:framework}Framework of study}
\end{subfigure}
\caption{Representation of the framework of study.}
\end{figure}
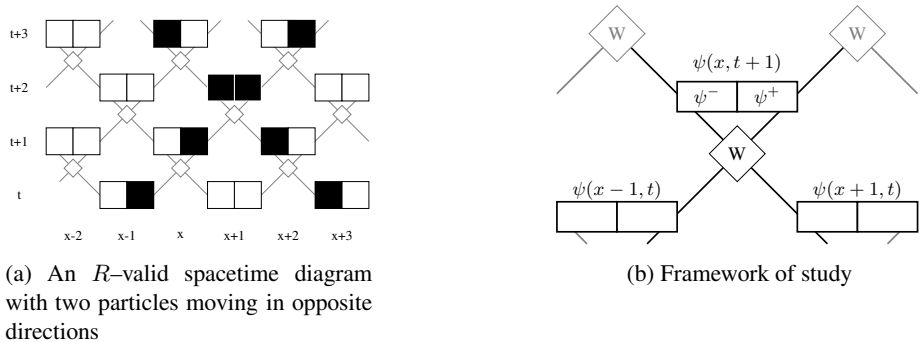

\paragraph{Local transformation and its invariant.} %\xx{Here we are interested to enforce a local black/white--symmetry. We give ourselves a bit field $\varphi : (x,t)\mapsto\{0,1\}\cong\mathbb{Z}_2$ that specifies, at each spacetime point, whether the symmetry is applied or not. In other words, the $\mathbb{Z}_2$ group, gets represented by $G_\varphi=X^\varphi \otimes X^\varphi$\ldots Fix notations, explain what they mean. State that they transform spacetime diagrams into supposedly equivalent spacetime diagrams. Define I(c). {\em Figures : a global symmetry diagram, a broken local symmetry diagram (broken determinism, broken equivalency)}}
In our running example we will be interested in enforcing a {\em local} black/white--symmetry. We formalize this by giving ourselves a bit field $\varphi:\mathbb{Z}^2\rightarrow\{0,1\}\cong \mathbb{Z}_2$ that specifies, at each spacetime point, whether the symmetry is to be applied. In other words, the action of the $\mathbb{Z}_2$ group at $(x, t)$ gets represented upon $\Sigma$ by
\begin{equation}
G_{\varphi}(x,t)=(X\otimes X)^{\varphi(x,t)}
\end{equation}
with $X$ the \textsc{Not} gate. Then, an entire spacetime diagram $\psi$ transforms into an entire spacetime diagram $G_\varphi \psi$ via $(G_\varphi \psi)(x,t) = G_{\varphi}(x,t) \psi(x,t)$. This is the symmetry we are trying to enforce. Thus, whenever two spacetime diagrams are related by a transformation $G_\varphi$ for some $\varphi$, they are understood as {\em physically equivalent}. For instance, in Fig-\ref{fig:bwsym1}, the first three diagrams are physically equivalent with respect to the symmetry. They all represent this one particle moving right, which can be understood as an invariant of the symmetry. Given a spacetime diagram $\psi$, we write $\widetilde{\psi}=\{ G_\varphi\psi\,|\,\varphi\in \mathbb{Z}^2\rightarrow\mathbb{Z}_2\}$ for its {\em invariant}, (physical) equivalence class. In the case of our field $\psi$, the bit field $J(x,t)=\psi^+(x,t)-\psi^-(x,t)$\ (\!\!\!$\mod 2$) fully characterizes $\widetilde{\psi}$, since for all $\psi$ and $\psi'$, $G_\varphi \psi=\psi'$ if and only if $J=J'$. Fig-\ref{subfig:Jfield} shows the underlying $J$.

\begin{figure}[ht]
\begin{subfigure}[t]{0.4\textwidth}
  \centering
  \Large
  \resizebox{\textwidth}{!}{\begin{tikzpicture}

%Lines
\draw[color=gray] (5.5,0) -- (12.5,7);
\draw[color=gray] (2.5,1) -- (8.5,7);
\draw[color=gray] (10,0.5) -- (14,4.5);
\draw[color=gray] (2,4.5) -- (4.5,7);

\draw[color=gray] (4,0.5) -- (2.5,2);
\draw[color=gray] (14,2.5) -- (9.5,7);
\draw[color=gray] (8,0.5) -- (2.5,6);
\draw[color=gray] (12,0.5) -- (5.5,7);

\node at (3, 1.5) [draw,scale=1,aspect=1,diamond,color=gray, fill=white]{}; 
\node at (7, 1.5) [draw,scale=1,aspect=1,diamond,color=gray, fill=white]{}; 
\node at (11, 1.5) [draw,scale=1,aspect=1,diamond,color=gray, fill=white]{}; 

\node at (5, 3.5) [draw,scale=1,aspect=1,diamond,color=gray, fill=white]{}; 
\node at (9, 3.5) [draw,scale=1,aspect=1,diamond,color=gray, fill=white]{}; 
\node at (13, 3.5) [draw,scale=1,aspect=1,diamond,color=gray, fill=white]{}; 

\node at (3, 5.5) [draw,scale=1,aspect=1,diamond,color=gray, fill=white]{}; 
\node at (7, 5.5) [draw,scale=1,aspect=1,diamond,color=gray, fill=white]{}; 
\node at (11, 5.5) [draw,scale=1,aspect=1,diamond,color=gray, fill=white]{};

% t
  \filldraw[color=black, fill=black, thick](4,0) rectangle (5,1);
  \filldraw[color=black, fill=white, thick](5,0) rectangle (6,1);
  \filldraw[color=white, fill=black, thick](8,0) rectangle (9,1);
  \filldraw[color=white, fill=black, thick](9,0) rectangle (10,1);
  \filldraw[color=white, fill=black, thick](12,0) rectangle (13,1);
  \filldraw[color=white, fill=black, thick](13,0) rectangle (14,1);

% t+1 
  \filldraw[color=white, fill=black, thick](2,2) rectangle (3,3);
  \filldraw[color=white, fill=black, thick](3,2) rectangle (4,3);
  \filldraw[color=black, fill=black, thick](6,2) rectangle (7,3);
  \filldraw[color=black, fill=white, thick](7,2) rectangle (8,3);
  \filldraw[color=white, fill=black, thick](10,2) rectangle (11,3);
  \filldraw[color=white, fill=black, thick](11,2) rectangle (12,3);

% t+2
  \filldraw[color=white, fill=black, thick](4,4) rectangle (5,5);
  \filldraw[color=white, fill=black, thick](5,4) rectangle (6,5);
  \filldraw[color=black, fill=black, thick](8,4) rectangle (9,5);
  \filldraw[color=black, fill=white, thick](9,4) rectangle (10,5);
  \filldraw[color=white, fill=black, thick](12,4) rectangle (13,5);
  \filldraw[color=white, fill=black, thick](13,4) rectangle (14,5);

% t+3
  \filldraw[color=white, fill=black, thick](2,6) rectangle (3,7);
  \filldraw[color=white, fill=black, thick](3,6) rectangle (4,7);
  \filldraw[color=white, fill=black, thick](6,6) rectangle (7,7);
  \filldraw[color=white, fill=black, thick](7,6) rectangle (8,7);
  \filldraw[color=black, fill=black, thick](10,6) rectangle (11,7);
  \filldraw[color=black, fill=white, thick](11,6) rectangle (12,7);

% Writings 

\draw[color=black] (1, 0.5) node {t};
\draw[color=black] (1, 2.5) node {t+1};
\draw[color=black] (1, 4.5) node {t+2};
\draw[color=black] (1, 6.5) node {t+3};

\draw[color=black] (3, -1) node {x-2};
\draw[color=black] (5, -1) node {x-1};
\draw[color=black] (7, -1) node {x};
\draw[color=black] (9, -1) node {x+1};
\draw[color=black] (11, -1) node {x+2};
\draw[color=black] (13, -1) node {x+3};

\end{tikzpicture}}
  \caption{$R$--valid spacetime diagram showing a particle moving right.}
\end{subfigure}\hfill
\begin{subfigure}[t]{0.4\textwidth}
  \centering
  \Large
  \resizebox{\textwidth}{!}{\begin{tikzpicture}

% links

\draw[color=gray] (5.5,0) -- (12.5,7);
\draw[color=gray] (2.5,1) -- (8.5,7);
\draw[color=gray] (10,0.5) -- (14,4.5);
\draw[color=gray] (2,4.5) -- (4.5,7);

\draw[color=gray] (4,0.5) -- (2.5,2);
\draw[color=gray] (14,2.5) -- (9.5,7);
\draw[color=gray] (8,0.5) -- (2.5,6);
\draw[color=gray] (12,0.5) -- (5.5,7);

\node at (3, 1.5) [draw,scale=1,aspect=1,diamond,color=gray, fill=white]{}; 
\node at (7, 1.5) [draw,scale=1,aspect=1,diamond,color=gray, fill=white]{}; 
\node at (11, 1.5) [draw,scale=1,aspect=1,diamond,color=gray, fill=white]{}; 

\node at (5, 3.5) [draw,scale=1,aspect=1,diamond,color=gray, fill=white]{}; 
\node at (9, 3.5) [draw,scale=1,aspect=1,diamond,color=gray, fill=white]{}; 
\node at (13, 3.5) [draw,scale=1,aspect=1,diamond,color=gray, fill=white]{}; 

\node at (3, 5.5) [draw,scale=1,aspect=1,diamond,color=gray, fill=white]{}; 
\node at (7, 5.5) [draw,scale=1,aspect=1,diamond,color=gray, fill=white]{}; 
\node at (11, 5.5) [draw,scale=1,aspect=1,diamond,color=gray, fill=white]{};

% t
  \filldraw[color=black, fill=white, thick](4,0) rectangle (5,1);
  \filldraw[color=black, fill=black, thick](5,0) rectangle (6,1);
  \filldraw[color=black, fill=white, thick](8,0) rectangle (9,1);
  \filldraw[color=black, fill=white, thick](9,0) rectangle (10,1);
  \filldraw[color=black, fill=white, thick](12,0) rectangle (13,1);
  \filldraw[color=black, fill=white, thick](13,0) rectangle (14,1);

% t+1 
  \filldraw[color=black, fill=white, thick](2,2) rectangle (3,3);
  \filldraw[color=black, fill=white, thick](3,2) rectangle (4,3);
  \filldraw[color=black, fill=white, thick](6,2) rectangle (7,3);
  \filldraw[color=black, fill=black, thick](7,2) rectangle (8,3);
  \filldraw[color=black, fill=white, thick](10,2) rectangle (11,3);
  \filldraw[color=black, fill=white, thick](11,2) rectangle (12,3);
    
% t+2
  \filldraw[color=black, fill=white, thick](4,4) rectangle (5,5);
  \filldraw[color=black, fill=white, thick](5,4) rectangle (6,5);
  \filldraw[color=black, fill=white, thick](8,4) rectangle (9,5);
  \filldraw[color=black, fill=black, thick](9,4) rectangle (10,5);
  \filldraw[color=black, fill=white, thick](12,4) rectangle (13,5);
  \filldraw[color=black, fill=white, thick](13,4) rectangle (14,5);

% t+3
  \filldraw[color=black, fill=white, thick](2,6) rectangle (3,7);
  \filldraw[color=black, fill=white, thick](3,6) rectangle (4,7);
  \filldraw[color=black, fill=white, thick](6,6) rectangle (7,7);
  \filldraw[color=black, fill=white, thick](7,6) rectangle (8,7);
  \filldraw[color=black, fill=white, thick](10,6) rectangle (11,7);
  \filldraw[color=black, fill=black, thick](11,6) rectangle (12,7);

% Writings 

\draw[color=black] (1, 0.5) node {t};
\draw[color=black] (1, 2.5) node {t+1};
\draw[color=black] (1, 4.5) node {t+2};
\draw[color=black] (1, 6.5) node {t+3};

\draw[color=black] (3, -1) node {x-2};
\draw[color=black] (5, -1) node {x-1};
\draw[color=black] (7, -1) node {x};
\draw[color=black] (9, -1) node {x+1};
\draw[color=black] (11, -1) node {x+2};
\draw[color=black] (13, -1) node {x+3};

\end{tikzpicture}}
  \caption{Still an $R$--valid spacetime diagram after applying the global symmetry $G_\varphi$ with $\varphi$ constant equal to one.}
\end{subfigure}

\begin{subfigure}[t]{0.4\textwidth}
  \centering
  \Large
  \resizebox{\textwidth}{!}{\begin{tikzpicture}

% Lines
\draw[color=gray] (5.5,0) -- (12.5,7);
\draw[color=gray] (2.5,1) -- (8.5,7);
\draw[color=gray] (10,0.5) -- (14,4.5);
\draw[color=gray] (2,4.5) -- (4.5,7);

\draw[color=gray] (4,0.5) -- (2.5,2);
\draw[color=gray] (14,2.5) -- (9.5,7);
\draw[color=gray] (8,0.5) -- (2.5,6);
\draw[color=gray] (12,0.5) -- (5.5,7);

\node at (3, 1.5) [draw,scale=1,aspect=1,diamond,color=gray, fill=white]{}; 
\node at (7, 1.5) [draw,scale=1,aspect=1,diamond,color=gray, fill=white]{}; 
\node at (11, 1.5) [draw,scale=1,aspect=1,diamond,color=gray, fill=white]{}; 

\node at (5, 3.5) [draw,scale=1,aspect=1,diamond,color=gray, fill=white]{}; 
\node at (9, 3.5) [draw,scale=1,aspect=1,diamond,color=gray, fill=white]{}; 
\node at (13, 3.5) [draw,scale=1,aspect=1,diamond,color=gray, fill=white]{}; 

\node at (3, 5.5) [draw,scale=1,aspect=1,diamond,color=gray, fill=white]{}; 
\node at (7, 5.5) [draw,scale=1,aspect=1,diamond,color=gray, fill=white]{}; 
\node at (11, 5.5) [draw,scale=1,aspect=1,diamond,color=gray, fill=white]{};

% t
  \filldraw[color=black, fill=black, thick](4,0) rectangle (5,1);
  \filldraw[color=black, fill=white, thick](5,0) rectangle (6,1);
  \filldraw[color=black, fill=white, thick](8,0) rectangle (9,1);
  \filldraw[color=black, fill=white, thick](9,0) rectangle (10,1);
  \filldraw[color=black, fill=white, thick](12,0) rectangle (13,1);
  \filldraw[color=black, fill=white, thick](13,0) rectangle (14,1);

% t+1 
  \filldraw[color=white, fill=black, thick](2,2) rectangle (3,3);
  \filldraw[color=white, fill=black, thick](3,2) rectangle (4,3);
  \filldraw[color=black, fill=black, thick](6,2) rectangle (7,3);
  \filldraw[color=black, fill=white, thick](7,2) rectangle (8,3);
  \filldraw[color=black, fill=white, thick](10,2) rectangle (11,3);
  \filldraw[color=black, fill=white, thick](11,2) rectangle (12,3);

% t+2
  \filldraw[color=white, fill=black, thick](4,4) rectangle (5,5);
  \filldraw[color=white, fill=black, thick](5,4) rectangle (6,5);
  \filldraw[color=black, fill=white, thick](8,4) rectangle (9,5);
  \filldraw[color=black, fill=black, thick](9,4) rectangle (10,5);
  \filldraw[color=black, fill=white, thick](12,4) rectangle (13,5);
  \filldraw[color=black, fill=white, thick](13,4) rectangle (14,5);

% t+3
  \filldraw[color=white, fill=black, thick](2,6) rectangle (3,7);
  \filldraw[color=white, fill=black, thick](3,6) rectangle (4,7);
  \filldraw[color=white, fill=black, thick](6,6) rectangle (7,7);
  \filldraw[color=white, fill=black, thick](7,6) rectangle (8,7);
  \filldraw[color=black, fill=white, thick](10,6) rectangle (11,7);
  \filldraw[color=black, fill=black, thick](11,6) rectangle (12,7);

% Writings 

\draw[color=blue, dashed, very thick] (8,-1.5) -- (8,7.2);
\draw[color=blue] (5, -1) node {\LARGE $x\leq0$, $\varphi(x,.) = 0$};
\draw[color=blue] (11, -1) node {\LARGE $x>0$, $\varphi(x,.) = 1$};

\draw[color=black] (1, 0.5) node {t};
\draw[color=black] (1, 2.5) node {t+1};
\draw[color=black] (1, 4.5) node {t+2};
\draw[color=black] (1, 6.5) node {t+3};

\end{tikzpicture}}
  \caption{\label{subfig:brokensym}Not an $R$--valid spacetime diagram after applying the local symmetry $G_\varphi$ with space-dependent $\varphi$.}
\end{subfigure}\hfill
\begin{subfigure}[t]{0.4\textwidth}
  \centering
  \Large
  \resizebox{\textwidth}{!}{\begin{tikzpicture}

% links

\draw[color=gray] (5.5,0) -- (12.5,7);
\draw[color=gray] (2.5,1) -- (8.5,7);
\draw[color=gray] (10,0.5) -- (14,4.5);
\draw[color=gray] (2,4.5) -- (4.5,7);

\draw[color=gray] (4,0.5) -- (2.5,2);
\draw[color=gray] (14,2.5) -- (9.5,7);
\draw[color=gray] (8,0.5) -- (2.5,6);
\draw[color=gray] (12,0.5) -- (5.5,7);

\node at (3, 1.5) [draw,scale=1,aspect=1,diamond,color=gray, fill=white]{}; 
\node at (7, 1.5) [draw,scale=1,aspect=1,diamond,color=gray, fill=white]{}; 
\node at (11, 1.5) [draw,scale=1,aspect=1,diamond,color=gray, fill=white]{}; 

\node at (5, 3.5) [draw,scale=1,aspect=1,diamond,color=gray, fill=white]{}; 
\node at (9, 3.5) [draw,scale=1,aspect=1,diamond,color=gray, fill=white]{}; 
\node at (13, 3.5) [draw,scale=1,aspect=1,diamond,color=gray, fill=white]{}; 

\node at (3, 5.5) [draw,scale=1,aspect=1,diamond,color=gray, fill=white]{}; 
\node at (7, 5.5) [draw,scale=1,aspect=1,diamond,color=gray, fill=white]{}; 
\node at (11, 5.5) [draw,scale=1,aspect=1,diamond,color=gray, fill=white]{};

% t
  \filldraw[color=black, fill=black, thick](4,0) rectangle (6,1);
  \filldraw[color=black, fill=white, thick](8,0) rectangle (10,1);
  \filldraw[color=black, fill=white, thick](12,0) rectangle (14,1);

% t+1 
  \filldraw[color=black, fill=white, thick](2,2) rectangle (4,3);
  \filldraw[color=black, fill=black, thick](6,2) rectangle (8,3);
  \filldraw[color=black, fill=white, thick](10,2) rectangle (12,3);
    
% t+2
  \filldraw[color=black, fill=white, thick](4,4) rectangle (6,5);
  \filldraw[color=black, fill=black, thick](8,4) rectangle (10,5);
  \filldraw[color=black, fill=white, thick](12,4) rectangle (14,5);

% t+3
  \filldraw[color=black, fill=white, thick](2,6) rectangle (4,7);
  \filldraw[color=black, fill=white, thick](6,6) rectangle (8,7);
  \filldraw[color=black, fill=black, thick](10,6) rectangle (12,7);

% Writings 

\draw[color=black] (1, 0.5) node {t};
\draw[color=black] (1, 2.5) node {t+1};
\draw[color=black] (1, 4.5) node {t+2};
\draw[color=black] (1, 6.5) node {t+3};

\draw[color=black] (3, -1) node {x-2};
\draw[color=black] (5, -1) node {x-1};
\draw[color=black] (7, -1) node {x};
\draw[color=black] (9, -1) node {x+1};
\draw[color=black] (11, -1) node {x+2};
\draw[color=black] (13, -1) node {x+3};

\end{tikzpicture}}
  \caption{\label{subfig:Jfield}The $J$-field that characterizes the invariant under $G_\varphi$, common to the other three spacetime diagrams.}
\end{subfigure}
\caption{\label{fig:bwsym1}Three physically equivalent spacetime diagrams, and their invariant.}
\end{figure}
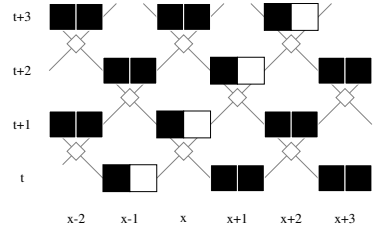
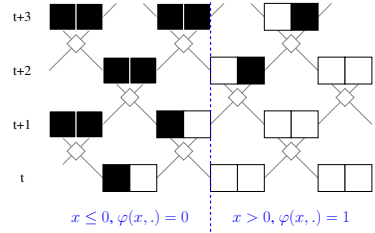
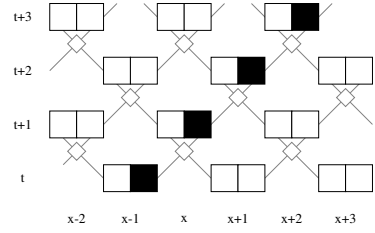
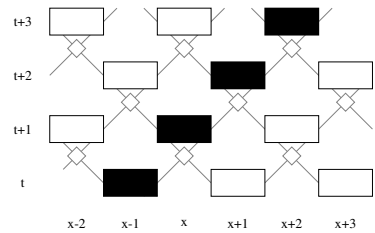

\paragraph{The gauge invariance condition.} %\xx{Describe the problem. What we are looking for is a deterministic W that verifies (*). (*) is the gauge-invariance condition. {\em In order to attain this, we introduce an inhomogeneous W.}}
Given $\psi(.,t)$ and $(G_{\varphi}\psi)(.,t)$ two physically equivalent inputs, it should be the case that our theory produces two physically equivalent outputs $\psi(.,t+1)$ and $(G_{\varphi}\psi)(.,t+1)$. This leads to the following definition.
%%Generally speaking, for this to happen, a local rule $T$ has to verify the following condition :
\begin{definition}[Gauge-invariance]
A theory $T$ is gauge-invariant if and only if there exists $Z$ a theory such that for all $\varphi$
\begin{align}\label{eq:gaugeinvariance}
G_{Z\varphi} \circ T=T\circ G_{\varphi}
\end{align}
\end{definition}
%\GDM{At least in the text until here it is not clear how do you use c. You say at the beginning that c is a spacetime diagram (without formally defining this object). I guess here, it is a given configuration of our $\Psi$ but why in (1) on the second line you use c on LHS on the left side of the implication and then you use c on RHS on the right side of the implications?? again, even if you clearify me why, it is still not clear in the manuscript.}\xx{I think we just forgot to change the $c$ to $\psi$ when we decided to change the notations. I let the comment for Pablo to see it and confirm because it may be something else, but i.m.o, there shouldn't be any $c$ here.}
%This is the the {\em gauge-invariance} condition.
%We write $c(x,t)$ to denote an entire spacetime diagram, whereas the shorthand $c(x)$ is used for a spacelike configuration at some given time, e.g. $c(x,0)$. In addition, we write $Z(\varphi)_z$ to denote the value of the configuration driven by $Z$ at point $x$ in order to avoid giving explicitly the dependence in $\varphi$.

The above-defined RCA fails to meet this requirement. 
An example of this failure is provided by Fig-\ref{fig:bwsym1}, which shows three physically equivalent spacetime diagrams, i.e. that are $G_\varphi$--related. Clearly the first two are $R$--valid, but the third one is not, as can be seen from looking at $\psi(x,t+1)$.\\
Indeed, on the one hand cell $\psi(x,t+1)$ of Fig-\ref{subfig:brokensym} needs have different-color subcells, as it is a $G_{\varphi}(x,t+1)$ of that of the other diagrams, and $G_{\varphi}(x,t+1)$ conserves same-colorness. But, on the other hand, cell $\psi(x,t+1)$ of Fig-\ref{subfig:brokensym} needs have same-color subcells, as it is produced by a $W$ which is fed with same-color subcells---due to the particular choice of $\varphi(x-1,t)$ and $\varphi(x+1,t)$---and since $W$ conserves same-colorness. This cannot be fixed with a better choice of $\varphi(x,t+1)$. Therefore, our previously defined RCA fails to verify the gauge-invariance condition.\\ 
The gauging procedure proceeds by extending $R$ into an inhomogeneous dynamics.

\section{The gauge field}\label{sec:gaugefield}
\paragraph{Introducing the gauge field.} %\xx{We introduce the gauge field A in order to get (*). We obtain $W_A$ which verifies \ref{eq:gaugeinvariance} : show that A is modified by $G_\varphi$, explicitely write (*) in this case... Explain what $S_A[I(c)]$ is and how $G_\varphi$ acts on both $c$ and $A$. }
In order to obtain the gauge-invariance condition (\ref{eq:gaugeinvariance}), the standard procedure is to make the theory $R$ into an inhomogeneous theory $R_\bullet$, and aim at {\em inhomogeneous gauge-invariance}:
\begin{definition}[Inhomogeneous gauge-invariance]
A theory $T_\bullet$ is {\em inhomogeneous gauge-invariant} if and only if there exists $Z$ a theory such that for all $Z$--valid $\varphi$,
\begin{align}\label{eq:inhomoggaugeinvariance}
G_{Z\varphi} \circ T_\bullet=T_{G_\varphi \bullet}\circ G_{\varphi}
\end{align}
\end{definition}
The spacetime diagram $A$, which specifies the spacetime dependency of $R$, is referred to as the {\em gauge field}, or Ehresmann connection in mathematics. Back to our running example, we are thus looking for an extension of $R$ into an $R_A$ and of $G_\varphi$, so that it acts on both $\psi$ and $A$, in order to achieve condition \eqref{eq:inhomoggaugeinvariance}. Developing, we need that there exists $Z$ such that for all $Z$--valid $\varphi$, for all $A$, for all $R_A$--valid $\psi$, for all $x$, 
\begin{align}
(G_{Z\varphi}(R_A \psi))(x)&=(R_{G_{\varphi}A}(G_{\varphi}\psi))(x)
\end{align}
which, for our running example, translates into :
\begin{align}
(X \otimes X)^{(Z\varphi)(x)} &(W_A (\psi^+(x-1)\otimes\psi^-(x+1)))\\
&= W_{G_\varphi A} (X^{\varphi(x-1)}\psi^+(x-1)\otimes X^{\varphi(x+1)}\psi^-(x+1))
\end{align}
This is equivalent to
\begin{equation}
W_{G_\varphi A}  = (X \otimes X)^{(Z\varphi)(x)} W_{A} (X^{-\varphi(x-1)} \otimes X^{-\varphi(x+1)}).
\end{equation}
A somewhat minimal choice verifying the above condition is to take  $A:\mathbb{Z}^2\rightarrow\mathbb{Z}_2^2$ a 2-bits field, and 
\begin{equation}
W_A = W (X^{A_r} \otimes X^{A_l}),
\end{equation}
with $A$ transforming under $G_\varphi$ as :

\begin{equation}
A(x) = \begin{pmatrix} A_r(x) \\ A_l(x)
\end{pmatrix}\mapsto
\begin{pmatrix}
A_r(x) + (Z\varphi)(x)-\varphi(x-1) \\
A_l(x) + (Z\varphi)(x)-\varphi(x+1)
\end{pmatrix} = (G_\varphi A)(x)
\end{equation}

Indeed, 
\begin{align}
W_{(G_\varphi A) (x)}&=W (X^{A_r(x)+ (Z\varphi)(x)-\varphi(x-1)} \otimes X^{A_l(x)+ (Z\varphi)(x)-\varphi(x+1)})\\
&=W(X \otimes X)^{(Z\varphi)(x)}  (X^{A_r(x)} \otimes X^{A_l(x))}) (X^{-\varphi(x-1)} \otimes X^{-\varphi(x+1)})\\
&=(X \otimes X)^{(Z\varphi)(x)} W_A (X^{-\varphi(x-1)} \otimes X^{-\varphi(x+1)}).
\end{align}
It follows that the induced $R_\bullet$ verifies the inhomogeneous gauge-invariance condition \eqref{eq:inhomoggaugeinvariance}. This procedure is reminiscent of the route physics follows to account for a local phase transformation on the state vector $\psi(x,t)$, which leads to the modern formulation of electrodynamics, with $A(x,t)$ playing the role of the electromagnetic potential. 

\paragraph{Invariant of the gauge field.\label{par:gaugefieldinvariant}} Since $A$ also transforms under $G_\varphi$, we may again seek to characterize its invariant $\widetilde{A}=\{ G_\varphi A\,|\,\varphi\in \mathbb{Z}^2\rightarrow\mathbb{Z}_2\}$ by means of some field $F$. This time, in order to do so, we introduce the light-like discrete derivatives  
\begin{align}
\Delta_r A(x,t) &= A(x,t+1) - A(x-1,t) \\
\Delta_l A(x,t) &= A(x,t+1) - A(x+1,t).
\end{align}
\begin{remark}[Gauge-field invariant] \label{prop:gaugeinv} The bit field
$F(x,t) = \Delta_r A_l(x,t) -\Delta_l A_r(x,t)$ fully characterizes the invariant of the gauge field. More precisely, for any $A$ and $A'$, there exists $\varphi$ such that $G_\varphi A=A'$ is equivalent to $F=F'$.
\end{remark}
\begin{proof}
A lengthy but easy computation shows that, given any $A$ and $A'$, $G_\varphi A=A'$ entails that $F=F'$. The converse is harder to prove, but also true. Indeed, suppose that we are given $A$ and $A'$ such that $F=F'$. We want to construct a $\varphi$ such that $G_\varphi A=A'$, i.e. such that we have both 
\begin{align}
\Delta_r \varphi = A'_r-A_r \qquad\textrm{and }\qquad\Delta_l \varphi = A'_l-A_l.\label{eq:phireq}
\end{align}
Clearly, starting from an initial spacelike configuration $\varphi$, the requirements \eqref{eq:phireq} fix the rest of $\varphi$ across spacetime. Unless they conflict. This could happen every time we close up a square. Starting from $\varphi(x,t)$, say, the requirements \eqref{eq:phireq} provide two prescriptions for $\varphi(x,t+2)$, namely $\varphi(x,t)+(A'_l-A_l)(x-1,t)+(A'_r-A_r)(x,t+1)$ via the left-then-right path, and $\varphi(x,t)+(A'_r-A_r)(x+1,t)+(A'_l-A_l)(x,t+1)$ via the right-then-left path. These need be equal, i.e we need 
\begin{align} 
(A'_l-A_l)(x,t+1)\!-\!(A'_l-A_l)(x-1,t) &=(A'_r-A_r)(x,t+1)\!-\!(A'_r-A_r)(x+1,t)\\
\Delta_r (A'_l-A_l)(x,t) &=\Delta_l (A'_r-A_r)(x,t)\\
\Delta_r A'_l- \Delta_r A_l &=\Delta_l A'_r- \Delta_l A_r\\
\Delta_l A_r-\Delta_r A_l &=\Delta_l A'_r-\Delta_r A'_l\\
F(x,t) &= F'(x,t)
\end{align}
which is our hypothesis. It follows that $\varphi$ exists and so the converse holds. $F$ fully characterizes $\widetilde{A}$. \hfill$\square$
\end{proof}
The role played by this discrete bit-field $F$ is analogous to that of the electromagnetic tensor, a differential $2$--form, which is the exterior derivative of the electromagnetic potential $A(x,t)$ and whose derivatives are prescribed by the Maxwell equations.
\begin{comment} We could take A a 3-bit vector, it may be simpler
\begin{equation}
W_A =(X^{A_0} \otimes X^{A_0}) W (X^{A_1} \otimes X^{A_2})
\end{equation}
\begin{equation}
A = \begin{pmatrix} A_0 \\ A_1 \\ A_2
\end{pmatrix}\rightarrow 
\begin{pmatrix}
A_0 + \varphi(x,t+1) \\
A_1 -\varphi(x+1,t) \\
A_2 -\varphi(x-1,t)
\end{pmatrix} = G_\varphi A
\end{equation}
\end{comment}

\begin{figure}
\centering
\resizebox{0.45\textwidth}{!}{\begin{tikzpicture}

%\tikzstyle{losange} = [diamond, draw, text badly centered, inner sep=0.2cm, aspect=2];

\draw[color=black, thick] (6,-0.5) -- (-1,4.5);
\draw[color=black, thick] (0,-0.5) -- (7,4.5);

\draw[color=gray, thick] (-1,4.5) -- (-2,4);
\draw[color=gray, thick] (7,4.5) -- (8,4);

\stateTikz{3}{2.5}{1.6}{0.5}{0}{0};
\draw (3,3.25) node {$\psi(x,t+1)$};
\draw (2.6,2.75) node {$\psi^-$};
\draw (3.6,2.75) node {$\psi^+$};
\stateTikz{-1}{-0.5}{1.6}{0.5}{0}{0};
\draw (-1,0.3) node {$\psi(x-1,t)$};
\stateTikz{7}{-0.5}{1.6}{0.5}{0}{0};
\draw (7,0.3) node {$\psi(x+1,t)$};

\node at (3, 1.5) [draw,scale=1.2,aspect=2,diamond,color=black, fill=white]{$W_A(x,t)$}; 
\node at (-1, 4.5) [draw,scale=0.8,aspect=2,diamond,color=gray, fill=white]{$W_A(x-1,t+1)$}; 
\node at (7, 4.5) [draw,scale=0.8,aspect=2,diamond,color=gray, fill=white]{$W_A(x+1,t+1)$}; 

\ovalTikz{0.8}{0.3}{0}{0}{3}{-0.25};
\draw[color=\gaugecol] (3,0.3) node {$A(x,t)$};
\draw[color=\gaugecol] (2.6,-0.25) node {$A_l$};
\draw[color=\gaugecol] (3.4,-0.25) node {$A_r$};
\ovalTikz{0.8}{0.3}{0}{0}{-1}{2.75};
\draw[color=\gaugecol] (-1,3.25) node {$A(x-1,t+1)$};
\ovalTikz{0.8}{0.3}{0}{0}{7}{2.75};
\draw[color=\gaugecol] (7,3.25) node {$A(x+1,t+1)$};
\end{tikzpicture}

\begin{comment} % Former version
\begin{tikzpicture}

%\tikzstyle{losange} = [diamond, draw, text badly centered, inner sep=0.2cm, aspect=2];

\draw[color=black, thick] (6,-0.5) -- (-1,6.5);
\draw[color=black, thick] (0,-0.5) -- (7,6.5);

\draw[color=gray, thick] (-1,6.5) -- (-2,5.5);
\draw[color=gray, thick] (7,6.5) -- (8,5.5);

\stateTikz{3}{4}{1.6}{0.5}{0}{0};
\draw (3,4.75) node {$\psi(x,t+1)$};
\draw (2.6,4.25) node {$\psi^-$};
\draw (3.6,4.25) node {$\psi^+$};
\stateTikz{-1}{-0.5}{1.6}{0.5}{0}{0};
\draw (-1,0.5) node {$\psi(x-1,t)$};
\stateTikz{7}{-0.5}{1.6}{0.5}{0}{0};
\draw (7,0.5) node {$\psi(x+1,t)$};

\node at (3, 2.5) [draw,scale=1.2,aspect=1,diamond,color=black, fill=white]{$W_A(x,t)$}; 
\node at (-1, 6.5) [draw,scale=0.8,aspect=1,diamond,color=gray, fill=white]{$W_A(x-1,t+1)$}; 
\node at (7, 6.5) [draw,scale=0.8,aspect=1,diamond,color=gray, fill=white]{$W_A(x+1,t+1)$}; 

\ovalTikz{0.8}{0.3}{0}{0}{3}{-0.25};
\draw[color=\gaugecol] (3,0.5) node {$A(x,t)$};
\draw[color=\gaugecol] (2.6,-0.25) node {$A_l$};
\draw[color=\gaugecol] (3.4,-0.25) node {$A_r$};
\ovalTikz{0.8}{0.3}{0}{0}{-1}{4.25};
\draw[color=\gaugecol] (-1,4.75) node {$A(x-1,t+1)$};
\ovalTikz{0.8}{0.3}{0}{0}{7}{4.25};
\draw[color=\gaugecol] (7,4.75) node {$A(x+1,t+1)$};
\end{tikzpicture}
\end{comment}}
\caption{\label{fig:framework2}The extended theory $R_\bullet$ now depends on a $2$--bits field $A$.}
\end{figure}
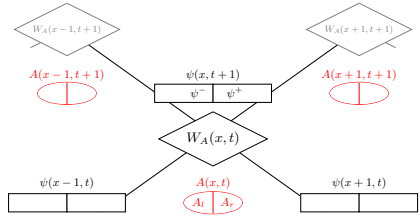
\paragraph{Gauge field physics.} %\xx{We now see that A introduce new physics (add a graph with 2 different A showing the impact - walls etc.). We can here have two points of view of A : either A is seen as an external field which can be tuned at will by the user/experimentalist[here if we want to link with a physical argument, we need to say that it is not A which can be "tuned", but the observable electro-magnetic field.], or we extend the configuration to account for A but if we do that we need to provide a dynamics for A. We want a $W'(c,A)$ which verifies (*). Therefore we introduce $I(c, A)$ and explain what equivalent now means.} 
It is crucial to understand that, even though $A$ was introduced just to enforce a symmetry, i.e. to make sure that physically equivalent states are mapped into physically equivalent states\ldots this newly introduced $A$ is also capable of a range of other things, i.e. it produces new physics. For instance, Fig-\ref{fig:gaugeinfluence} shows how, starting from the same initial conditions for $J$, but choosing different initial conditions for $A$, can lead to rather different $R_A$-valid spacetime diagrams---which may (Fig-\ref{fig:gauge5}) or may not be (Fig-\ref{fig:gauge3}) related by a $G_\phi$.\\
At this stage we can have two points of view upon $A$. Either $A$ is seen as an independent field, which could be to some extent tuned by the user/experimentalist (\eg\  in the case of the electromagnetism, one can physically engineer each component of the electromagnetic tensor, $F$, namely the electric and the magnetic field). Or we must extend the configuration space so as to account for $A$, as suggested by Fig-\ref{fig:framework2}. Of course if we do that we need to provide a dynamics for $A$, \ie\ we need to look for a theory $T$ upon $c(x,t)=(\psi(x,t),A(x,t))$, which still verifies the gauge-invariance condition \eqref{eq:gaugeinvariance}.

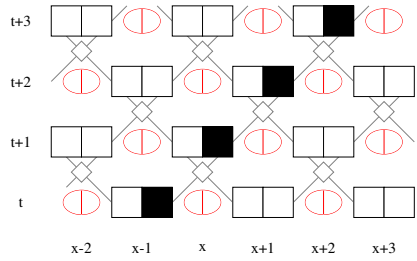
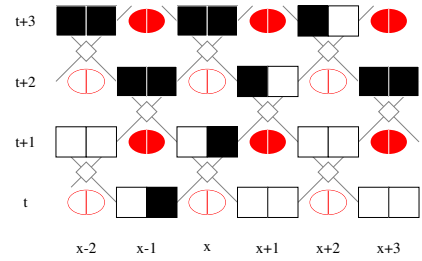
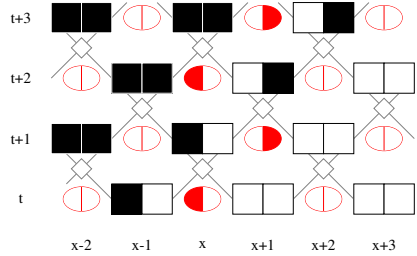
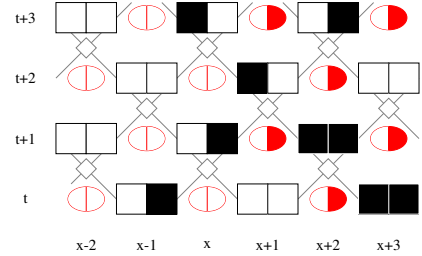
\begin{figure}
\begin{subfigure}[t]{0.45\textwidth}
  \centering
  \Large
  \resizebox{\textwidth}{!}{\begin{tikzpicture}

% links

\draw[color=gray] (5.5,0) -- (12.5,7);
\draw[color=gray] (2.5,1) -- (8.5,7);
\draw[color=gray] (10,0.5) -- (14,4.5);
\draw[color=gray] (2,4.5) -- (4.5,7);

\draw[color=gray] (4,0.5) -- (2.5,2);
\draw[color=gray] (14,2.5) -- (9.5,7);
\draw[color=gray] (8,0.5) -- (2.5,6);
\draw[color=gray] (12,0.5) -- (5.5,7);

\node at (3, 1.5) [draw,scale=1,aspect=1,diamond,color=gray, fill=white]{}; 
\node at (7, 1.5) [draw,scale=1,aspect=1,diamond,color=gray, fill=white]{}; 
\node at (11, 1.5) [draw,scale=1,aspect=1,diamond,color=gray, fill=white]{}; 

\node at (5, 3.5) [draw,scale=1,aspect=1,diamond,color=gray, fill=white]{}; 
\node at (9, 3.5) [draw,scale=1,aspect=1,diamond,color=gray, fill=white]{}; 
\node at (13, 3.5) [draw,scale=1,aspect=1,diamond,color=gray, fill=white]{}; 

\node at (3, 5.5) [draw,scale=1,aspect=1,diamond,color=gray, fill=white]{}; 
\node at (7, 5.5) [draw,scale=1,aspect=1,diamond,color=gray, fill=white]{}; 
\node at (11, 5.5) [draw,scale=1,aspect=1,diamond,color=gray, fill=white]{};

% t
  \stateTikz{5}{0}{1}{1}{0}{1};
  \stateTikz{9}{0}{1}{1}{0}{0};
  \stateTikz{13}{0}{1}{1}{0}{0};
  
  \ovalTikz{\gfwid}{\gfhei}{0}{0}{3}{0.5};
  \ovalTikz{\gfwid}{\gfhei}{0}{0}{7}{0.5};
  \ovalTikz{\gfwid}{\gfhei}{0}{0}{11}{0.5};

% t+1 
  \stateTikz{3}{2}{1}{1}{0}{0};
  \stateTikz{7}{2}{1}{1}{0}{1};
  \stateTikz{11}{2}{1}{1}{0}{0};
  
  \ovalTikz{\gfwid}{\gfhei}{0}{0}{5}{2.5};
  \ovalTikz{\gfwid}{\gfhei}{0}{0}{9}{2.5};
  \ovalTikz{\gfwid}{\gfhei}{0}{0}{13}{2.5};
    
% t+2
  \stateTikz{5}{4}{1}{1}{0}{0};
  \stateTikz{9}{4}{1}{1}{0}{1};
  \stateTikz{13}{4}{1}{1}{0}{0};

  \ovalTikz{\gfwid}{\gfhei}{0}{0}{3}{4.5};
  \ovalTikz{\gfwid}{\gfhei}{0}{0}{7}{4.5};
  \ovalTikz{\gfwid}{\gfhei}{0}{0}{11}{4.5};

% t+3
  \stateTikz{3}{6}{1}{1}{0}{0};
  \stateTikz{7}{6}{1}{1}{0}{0};
  \stateTikz{11}{6}{1}{1}{0}{1};
  
  \ovalTikz{\gfwid}{\gfhei}{0}{0}{5}{6.5};
  \ovalTikz{\gfwid}{\gfhei}{0}{0}{9}{6.5};
  \ovalTikz{\gfwid}{\gfhei}{0}{0}{13}{6.5};

% Writings 

\draw[color=black] (1, 0.5) node {t};
\draw[color=black] (1, 2.5) node {t+1};
\draw[color=black] (1, 4.5) node {t+2};
\draw[color=black] (1, 6.5) node {t+3};

\draw[color=black] (3, -1) node {x-2};
\draw[color=black] (5, -1) node {x-1};
\draw[color=black] (7, -1) node {x};
\draw[color=black] (9, -1) node {x+1};
\draw[color=black] (11, -1) node {x+2};
\draw[color=black] (13, -1) node {x+3};

\end{tikzpicture}}
  \caption{\label{fig:gauge1}A spacetime diagram with $S_\psi = \mathbf{I}$ ($A(x,t+1)=A_r(x+1,t) \otimes A_l(x-1,t)$) and thus $\widetilde{S}_J = \mathbf{I}$, initialized at $F=0$.}
\end{subfigure}\hfill
\begin{subfigure}[t]{0.45\textwidth}
  \centering
  \Large
  \resizebox{\textwidth}{!}{\begin{tikzpicture}

% links

\draw[color=gray] (5.5,0) -- (12.5,7);
\draw[color=gray] (2.5,1) -- (8.5,7);
\draw[color=gray] (10,0.5) -- (14,4.5);
\draw[color=gray] (2,4.5) -- (4.5,7);

\draw[color=gray] (4,0.5) -- (2.5,2);
\draw[color=gray] (14,2.5) -- (9.5,7);
\draw[color=gray] (8,0.5) -- (2.5,6);
\draw[color=gray] (12,0.5) -- (5.5,7);

\node at (3, 1.5) [draw,scale=1,aspect=1,diamond,color=gray, fill=white]{}; 
\node at (7, 1.5) [draw,scale=1,aspect=1,diamond,color=gray, fill=white]{}; 
\node at (11, 1.5) [draw,scale=1,aspect=1,diamond,color=gray, fill=white]{}; 

\node at (5, 3.5) [draw,scale=1,aspect=1,diamond,color=gray, fill=white]{}; 
\node at (9, 3.5) [draw,scale=1,aspect=1,diamond,color=gray, fill=white]{}; 
\node at (13, 3.5) [draw,scale=1,aspect=1,diamond,color=gray, fill=white]{}; 

\node at (3, 5.5) [draw,scale=1,aspect=1,diamond,color=gray, fill=white]{}; 
\node at (7, 5.5) [draw,scale=1,aspect=1,diamond,color=gray, fill=white]{}; 
\node at (11, 5.5) [draw,scale=1,aspect=1,diamond,color=gray, fill=white]{};

% t
  \stateTikz{5}{0}{1}{1}{0}{1};
  \stateTikz{9}{0}{1}{1}{0}{0};
  \stateTikz{13}{0}{1}{1}{0}{0};
  
  \ovalTikz{\gfwid}{\gfhei}{0}{0}{3}{0.5};
  \ovalTikz{\gfwid}{\gfhei}{0}{0}{7}{0.5};
  \ovalTikz{\gfwid}{\gfhei}{0}{0}{11}{0.5};

% t+1 
  \stateTikz{3}{2}{1}{1}{0}{0};
  \stateTikz{7}{2}{1}{1}{0}{1};
  \stateTikz{11}{2}{1}{1}{0}{0};
  
  \ovalTikz{\gfwid}{\gfhei}{1}{1}{5}{2.5};
  \ovalTikz{\gfwid}{\gfhei}{1}{1}{9}{2.5};
  \ovalTikz{\gfwid}{\gfhei}{1}{1}{13}{2.5};
    
% t+2
  \stateTikz{5}{4}{1}{1}{1}{1};
  \stateTikz{9}{4}{1}{1}{1}{0};
  \stateTikz{13}{4}{1}{1}{1}{1};

  \ovalTikz{\gfwid}{\gfhei}{0}{0}{3}{4.5};
  \ovalTikz{\gfwid}{\gfhei}{0}{0}{7}{4.5};
  \ovalTikz{\gfwid}{\gfhei}{0}{0}{11}{4.5};

% t+3
  \stateTikz{3}{6}{1}{1}{1}{1};
  \stateTikz{7}{6}{1}{1}{1}{1};
  \stateTikz{11}{6}{1}{1}{1}{0};
  
  \ovalTikz{\gfwid}{\gfhei}{1}{1}{5}{6.5};
  \ovalTikz{\gfwid}{\gfhei}{1}{1}{9}{6.5};
  \ovalTikz{\gfwid}{\gfhei}{1}{1}{13}{6.5};

% Writings 

\draw[color=black] (1, 0.5) node {t};
\draw[color=black] (1, 2.5) node {t+1};
\draw[color=black] (1, 4.5) node {t+2};
\draw[color=black] (1, 6.5) node {t+3};

\draw[color=black] (3, -1) node {x-2};
\draw[color=black] (5, -1) node {x-1};
\draw[color=black] (7, -1) node {x};
\draw[color=black] (9, -1) node {x+1};
\draw[color=black] (11, -1) node {x+2};
\draw[color=black] (13, -1) node {x+3};

\end{tikzpicture}}
  \caption{\label{fig:gauge2}A spacetime diagram with $S_\psi = (X\otimes X)$ inducing $\widetilde{S}_J = \mathbf{I}$ again, initialized at $F=0$. Observe that the $J$ field is as in $(a)$.}
\end{subfigure}\vspace{0.5cm}
\begin{subfigure}[t]{0.45\textwidth}
  \centering
  \Large
  \resizebox{\textwidth}{!}{\begin{tikzpicture}

% links

\draw[color=gray] (5.5,0) -- (12.5,7);
\draw[color=gray] (2.5,1) -- (8.5,7);
\draw[color=gray] (10,0.5) -- (14,4.5);
\draw[color=gray] (2,4.5) -- (4.5,7);

\draw[color=gray] (4,0.5) -- (2.5,2);
\draw[color=gray] (14,2.5) -- (9.5,7);
\draw[color=gray] (8,0.5) -- (2.5,6);
\draw[color=gray] (12,0.5) -- (5.5,7);

\node at (3, 1.5) [draw,scale=1,aspect=1,diamond,color=gray, fill=white]{}; 
\node at (7, 1.5) [draw,scale=1,aspect=1,diamond,color=gray, fill=white]{}; 
\node at (11, 1.5) [draw,scale=1,aspect=1,diamond,color=gray, fill=white]{}; 

\node at (5, 3.5) [draw,scale=1,aspect=1,diamond,color=gray, fill=white]{}; 
\node at (9, 3.5) [draw,scale=1,aspect=1,diamond,color=gray, fill=white]{}; 
\node at (13, 3.5) [draw,scale=1,aspect=1,diamond,color=gray, fill=white]{}; 

\node at (3, 5.5) [draw,scale=1,aspect=1,diamond,color=gray, fill=white]{}; 
\node at (7, 5.5) [draw,scale=1,aspect=1,diamond,color=gray, fill=white]{}; 
\node at (11, 5.5) [draw,scale=1,aspect=1,diamond,color=gray, fill=white]{};

% t
  \stateTikz{5}{0}{1}{1}{1}{0};
  \stateTikz{9}{0}{1}{1}{0}{0};
  \stateTikz{13}{0}{1}{1}{0}{0};
  
  \ovalTikz{\gfwid}{\gfhei}{0}{0}{3}{0.5};
  \ovalTikz{\gfwid}{\gfhei}{1}{0}{7}{0.5};
  \ovalTikz{\gfwid}{\gfhei}{0}{0}{11}{0.5};

% t+1 
  \stateTikz{3}{2}{1}{1}{1}{1};
  \stateTikz{7}{2}{1}{1}{1}{0};
  \stateTikz{11}{2}{1}{1}{0}{0};
  
  \ovalTikz{\gfwid}{\gfhei}{0}{0}{5}{2.5};
  \ovalTikz{\gfwid}{\gfhei}{0}{1}{9}{2.5};
  \ovalTikz{\gfwid}{\gfhei}{0}{0}{13}{2.5};
    
% t+2
  \stateTikz{5}{4}{1}{1}{1}{1};
  \stateTikz{9}{4}{1}{1}{0}{1};
  \stateTikz{13}{4}{1}{1}{0}{0};

  \ovalTikz{\gfwid}{\gfhei}{0}{0}{3}{4.5};
  \ovalTikz{\gfwid}{\gfhei}{1}{0}{7}{4.5};
  \ovalTikz{\gfwid}{\gfhei}{0}{0}{11}{4.5};

% t+3
  \stateTikz{3}{6}{1}{1}{1}{1};
  \stateTikz{7}{6}{1}{1}{1}{1};
  \stateTikz{11}{6}{1}{1}{0}{1};
  
  \ovalTikz{\gfwid}{\gfhei}{0}{0}{5}{6.5};
  \ovalTikz{\gfwid}{\gfhei}{0}{1}{9}{6.5};
  \ovalTikz{\gfwid}{\gfhei}{0}{0}{13}{6.5};

% Writings 

\draw[color=black] (1, 0.5) node {t};
\draw[color=black] (1, 2.5) node {t+1};
\draw[color=black] (1, 4.5) node {t+2};
\draw[color=black] (1, 6.5) node {t+3};

\draw[color=black] (3, -1) node {x-2};
\draw[color=black] (5, -1) node {x-1};
\draw[color=black] (7, -1) node {x};
\draw[color=black] (9, -1) node {x+1};
\draw[color=black] (11, -1) node {x+2};
\draw[color=black] (13, -1) node {x+3};

\end{tikzpicture}}
  \caption{\label{fig:gauge5}A spacetime diagram with $S_\psi = \mathbf{I}$ and $\widetilde{S}_J=\mathbf{I}$, initialized at $F=0$ with an initial condition on $A$ differing from that of subfigure (a). Observe that the $J$ field is as in $(a)$.}
\end{subfigure} \hfill\vspace{0.5cm} \centering
\begin{subfigure}[t]{0.45\textwidth}
  \centering
  \Large
  \resizebox{\textwidth}{!}{\begin{tikzpicture}

% links

\draw[color=gray] (5.5,0) -- (12.5,7);
\draw[color=gray] (2.5,1) -- (8.5,7);
\draw[color=gray] (10,0.5) -- (14,4.5);
\draw[color=gray] (2,4.5) -- (4.5,7);

\draw[color=gray] (4,0.5) -- (2.5,2);
\draw[color=gray] (14,2.5) -- (9.5,7);
\draw[color=gray] (8,0.5) -- (2.5,6);
\draw[color=gray] (12,0.5) -- (5.5,7);

\node at (3, 1.5) [draw,scale=1,aspect=1,diamond,color=gray, fill=white]{}; 
\node at (7, 1.5) [draw,scale=1,aspect=1,diamond,color=gray, fill=white]{}; 
\node at (11, 1.5) [draw,scale=1,aspect=1,diamond,color=gray, fill=white]{}; 

\node at (5, 3.5) [draw,scale=1,aspect=1,diamond,color=gray, fill=white]{}; 
\node at (9, 3.5) [draw,scale=1,aspect=1,diamond,color=gray, fill=white]{}; 
\node at (13, 3.5) [draw,scale=1,aspect=1,diamond,color=gray, fill=white]{}; 

\node at (3, 5.5) [draw,scale=1,aspect=1,diamond,color=gray, fill=white]{}; 
\node at (7, 5.5) [draw,scale=1,aspect=1,diamond,color=gray, fill=white]{}; 
\node at (11, 5.5) [draw,scale=1,aspect=1,diamond,color=gray, fill=white]{};

% t
  \stateTikz{5}{0}{1}{1}{0}{1};
  \stateTikz{9}{0}{1}{1}{0}{0};
  \stateTikz{13}{0}{1}{1}{1}{1};
  
  \ovalTikz{\gfwid}{\gfhei}{0}{0}{3}{0.5};
  \ovalTikz{\gfwid}{\gfhei}{0}{0}{7}{0.5};
  \ovalTikz{\gfwid}{\gfhei}{0}{1}{11}{0.5};

% t+1 
  \stateTikz{3}{2}{1}{1}{0}{0};
  \stateTikz{7}{2}{1}{1}{0}{1};
  \stateTikz{11}{2}{1}{1}{1}{1};
  
  \ovalTikz{\gfwid}{\gfhei}{0}{0}{5}{2.5};
  \ovalTikz{\gfwid}{\gfhei}{0}{1}{9}{2.5};
  \ovalTikz{\gfwid}{\gfhei}{0}{1}{13}{2.5};
    
% t+2
  \stateTikz{5}{4}{1}{1}{0}{0};
  \stateTikz{9}{4}{1}{1}{1}{0};
  \stateTikz{13}{4}{1}{1}{0}{0};

  \ovalTikz{\gfwid}{\gfhei}{0}{0}{3}{4.5};
  \ovalTikz{\gfwid}{\gfhei}{0}{0}{7}{4.5};
  \ovalTikz{\gfwid}{\gfhei}{0}{1}{11}{4.5};

% t+3
  \stateTikz{3}{6}{1}{1}{0}{0};
  \stateTikz{7}{6}{1}{1}{1}{0};
  \stateTikz{11}{6}{1}{1}{0}{1};
  
  \ovalTikz{\gfwid}{\gfhei}{0}{0}{5}{6.5};
  \ovalTikz{\gfwid}{\gfhei}{0}{1}{9}{6.5};
  \ovalTikz{\gfwid}{\gfhei}{0}{1}{13}{6.5};

% Writings 

\draw[color=black] (1, 0.5) node {t};
\draw[color=black] (1, 2.5) node {t+1};
\draw[color=black] (1, 4.5) node {t+2};
\draw[color=black] (1, 6.5) node {t+3};

\draw[color=black] (3, -1) node {x-2};
\draw[color=black] (5, -1) node {x-1};
\draw[color=black] (7, -1) node {x};
\draw[color=black] (9, -1) node {x+1};
\draw[color=black] (11, -1) node {x+2};
\draw[color=black] (13, -1) node {x+3};

\end{tikzpicture}}
  \caption{\label{fig:gauge3}A spacetime diagram with  $A(x,t+1) = A(x,t)$   and $\widetilde{S}_J=\mathbf{I}$, initialized with $F=1$ at position $x$ and $F=0$ everywhere else. Observe that the $J$ field differs from that of $(a)$.}
\end{subfigure}

% \begin{subfigure}[t]{0.45\textwidth}
%   \centering
%   \Large
%   \resizebox{\textwidth}{!}{\input{figs/4diagwithgauge3.tikz}}
%   \caption{\label{fig:gauge4}Another spacetime diagram with $F=1$ at position $x$ and $F=0$ everywhere else, initialized to give the same solution as in $(c)$}
% \end{subfigure} 
\caption{\label{fig:gaugeinfluence}Examples of the influence of $F$, $\widetilde{S}_\bullet$ and $S_\bullet$}
\end{figure}

\paragraph{Gauge equivalence of two theories.\label{par:equivalency}} %\xx{We need to keep in mind that once a gauged invariance condition is obtained, the symmetry is implemented, which means that there are redundancies. This implies that two different cellular automata can in fact be physically equivalent up to a $\varphi$. Physically equivalent means : (**). Then we introduce "dynamically equivalent" (***) : how an initial condition in a cellular automaton can be re-encoded in another. $(**) \implies (***)$  with a simple example.\\{\em There is standard way of building W' and that is what we will do next.}}
We need to keep in mind that by its very nature, such a $T$ cannot be unique---in the sense that for every candidate $T$ there will be several other physically equivalent local rules. This is because, as $T$ fully implements the local symmetry, it is inherently redundant, and thus equivalent to other theories up to this redundancy. 
\begin{definition}[Physically equivalent theories]\label{def:physeq} Two gauge-invariant theories $T$ and $T'$ are {\em physically equivalent theories} if and only if for any $T$--valid spacetime diagram $c$, there exists $\varphi$ such that $G_\phi c$ is a $T'$--valid spacetime diagram, and reciprocally.
\end{definition}
This definition ensures that given theory $T$ and some input configuration $c(.,t)$, we can always encode the input as $G_{\varphi(.,t)} c(.,t)$, and have it evolve under $T'$, so as to retrieve $G_{\varphi(.,t+1)} c(.,t+1)$, which is physically equivalent to  $c(.,t+1)$. We will now build candidate theories $T$ and $T'$ by following the standard steps of the gauging procedure. 

\section{Gauge field dynamics}\label{sec:gaugedynamics}
\paragraph{Dynamics of the invariant of the gauge field.} %\xx{$W'$ take $(c,A)$ as input and give $(c,A)$ as output. But we already have $W_A$ that takes $(c,A)$ as input but ouput only $c$. Therefore, all that we need is an $U_c$  that takes $(c,A)$ as an input and gives $A$ as output. The standard method indeed procedes by decomposing $W'$ into $(W_A,U_c)$. For W' to verify (*), we need that $U_\psi(A)$ that verifies (*) (explicitely give (*) and I(A)). {\em To be clarified}. We now have the invariant $I(A)$. We can therefore begin by giving the dynamics for this invariant : $T_\psi[I(A)]$. {\em give an example. Analogy : A := electric potential ; $(F^{\mu \nu}=)0$ := $T_\psi$}}
The dynamics $T$ we want to build takes $c(.,t)$ as input and outputs $c(.,t+1)$. However, we already have $R_\bullet$ which takes $c(.,t) = (\psi(.,t), A(.,t))$ and outputs $\psi(.,t+1)$. Therefore, all we need is a rule $S_\bullet$ that takes $c(.,t)$ and outputs $A(.,t+1)$. The standard procedure indeed proceeds by decomposing $T$ into $R_\bullet$ and $S_\bullet$. For such a $T$ to verify (\ref{eq:gaugeinvariance}), we just need $S_\bullet$ to verify (\ref{eq:inhomoggaugeinvariance}), for the same $G_\varphi$ and $Z$ that work for $R_\bullet$.\\

The procedure goes in two steps. The first step is to prescribe a dynamics $\widetilde{S}$ not over $A$, but over its invariant $\widetilde{A}$, which in our case amounts to a dynamics over $F$. It may even be a $J$--dependent dynamics $\widetilde{S}_\bullet$. Such a dynamics will be gauge-invariant by definition, since $F$ and $J$ are gauge-invariant. Thus the particular choice of $\widetilde{S}_\bullet$ is only dictated by the phenomena that we wish to model. A simple choice, for instance, is to take $\widetilde{S}$ to be the identity. Then, if initially we had $F=0$ initially, this will remain the case. Beware that this does not mean that the behaviour of the underlying $A$ will be trivial. In fact it will remain largely undetermined, as $F=0$ just means $\Delta_r A_l(x,t)=\Delta_l A_r(x,t)$. But at least this constraint over the dynamics of $A$ is gauge-invariant. Fig-\ref{fig:gauge1}, \ref{fig:gauge2}, \ref{fig:gauge5} give examples of different $A$ that have $F=0$---illustrating how many different gauge fields can arise from the same prescription for the invariant. Thus, this first step does not suffice to prescribe $S_\bullet$. Hence the need for a second step called gauge-fixing.

\paragraph{Gauge-fixing : completing the dynamics.} %\xx{Having given the dynamics for $\widetilde{A}$ is an incomplete description of the dynamics of A. To complete it, we need to provide $U_\psi(A)$ which verify (*) and induces $T_\psi$. This is called gauge-fixing. As we noticed in the end of section-\ref{par:equivalency} the  we have several gauge fixing, all leading to the same $T_\psi$ (give U and U' as example). Chossing between physically equivalent U and U' is about convenience. Notice that $V_\psi(A)$ also verifies (*) but it does not induce the same $T_\psi$ and therefore is not physically equivalent. {\em therefore we now have a $W'$ that is built from $W_A$ and $U_c$ and which verifies (*) which was the aim of the gauging procedure.} Analogy to Lorentz-gauge invariace}
Gauge-fixing means choosing an actual $S_\bullet$ 
%which, by meeting the gauge-invariance condition \eqref{eq:inhomoggaugeinvariance}, induces the $F$ and therefore the $\widetilde{S}_\bullet$ that we had settled for in the first step.
which induces the $\widetilde{S}_\bullet$ that we had settled for in the first step. In our case, we need to fix an $S_\bullet$ such that for all $(\psi,A)$, if $A$ is an $S_\psi$--valid spacetime diagram, then $F$ is a $\widetilde{S}_J$--valid spacetime diagram, with $F$ and $J$ computed from $A$ and $\psi$.\\
%\xx{(Ca parait clair quand $S$ ne dépend pas de $\psi$, mais dans ce cas general je me demande si c'est vraiment possible a realiser.)}
It is crucial to understand that this time the choice of a particular $S_\bullet$ is not dictated by the physics, but by mere convenience. This assertion relies on the following proposition.
\begin{proposition}[Gauge-fixing soundness] Let $R_\bullet$ be an $A$--dependent inhomogeneous theory upon $\psi$, with respect to a given $G_\varphi$ and $Z$. Let $S_\bullet$ and $S'_\bullet$ be two $\psi$--dependent inhomogeneous gauge-invariant theories upon $A$, with respect to the same $G_\varphi$ and $Z$. If the two theories induce the same $\widetilde{S}_\bullet$, then: 
\begin{enumerate}
	\item For any $A$ an $S_\psi$--valid spacetime diagram, there exists $\varphi$ such that $A'$ is an $S'_{\psi'}$--valid spacetime diagram, with $(A',\psi')=G_\varphi(A,\psi)$.
    \item The theories $T=R_\bullet\wedge S_\bullet$ and $T'=R_\bullet\wedge S'_\bullet$ are physically equivalent theories.
\end{enumerate}
\end{proposition}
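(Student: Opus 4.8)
The plan is to establish Part 1 by an inductive, slice-by-slice construction of the gauge transformation $\varphi$, and then to read off Part 2 by combining Part 1 with the inhomogeneous gauge-invariance of $R_\bullet$.

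For Part 1, let $A$ be an $S_\psi$-valid spacetime diagram, and let $J$ (computed from $\psi$) and $F$ (computed from $(A,\psi)$) be the associated invariants; since $S_\bullet$ induces $\widetilde S_\bullet$, the field $F$ is $\widetilde S_J$-valid. The goal is to pick $\varphi$ so that $A':=G_\varphi A$ and $\psi':=G_\varphi\psi$ form an $S'_{\psi'}$-valid diagram. Fix the first slice(s) of $\varphi$ arbitrarily (say $0$), which determines $A'$ and $\psi'$ on those slices. Then proceed by induction: assume $\varphi$ has been fixed far enough that $A'$ and $\psi'$ are determined up to time $t$ and that $A'$ obeys the rule $S'_{\psi'}$ on every transition up to time $t$; let $\hat A(.,t+1):=S'_{\psi'(.,t)}(A'(.,t))$ be the slice that $S'$ prescribes next. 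The inductive step is to choose the next slice of $\varphi$ so that $(G_\varphi A)(.,t+1)=\hat A(.,t+1)$.

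Showing that such a slice of $\varphi$ exists is the heart of the proof, and it amounts to a one-step version of the computation in the proof of Remark \ref{prop:gaugeinv}. Spelling out the (spacetime-local) action of $G_\varphi$ on $A$, the equation $(G_\varphi A)(.,t+1)=\hat A(.,t+1)$ gives two prescriptions for the unknown slice of $\varphi$, one along each light-cone direction, and these are compatible if and only if the slice of $F$ produced by the extended segment $A'(.,0),\dots,A'(.,t),\hat A(.,t+1)$ coincides with the corresponding slice of $F[A]$. But that segment is $S'_{\psi'}$-valid, hence its $F$ is $\widetilde S_{J'}$-valid up to time $t$; moreover $J'=J$ and $F[A']=F[A]$ wherever both are defined, since $F$ and $J$ are gauge-invariant; and $F[A]$ is $\widetilde S_J$-valid. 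As $\widetilde S_J$ is deterministic it forces the same new slice of $F$ from this common past in both cases, so the compatibility condition holds and the desired slice of $\varphi$ exists. Iterating forward in time (and backward, handled symmetrically when the dynamics are reversible, as they are in our examples) extends $\varphi$ to all of $\mathbb{Z}^2$---one arranges it moreover to be $Z$-valid---which yields $A'=G_\varphi A$ an $S'_{\psi'}$-valid diagram with $(A',\psi')=G_\varphi(A,\psi)$. I expect this reconciliation step to be essentially the only real work: the point is to recognise that the freedom in the new slice of $\varphi$ exactly matches the gauge redundancy measured by $F$, and the bookkeeping is a little delicate because a single slice of $F$ straddles two slices of $A$.

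Part 2 is then short. Given a $T$-valid $c=(\psi,A)$, so that $\psi$ is $R_A$-valid and $A$ is $S_\psi$-valid, apply Part 1 to obtain $\varphi$ with $A':=G_\varphi A$ an $S'_{\psi'}$-valid diagram and $\psi':=G_\varphi\psi$. Since $\psi$ is $R_A$-valid, inhomogeneous gauge-invariance of $R_\bullet$---equation \eqref{eq:inhomoggaugeinvariance} read as the statement that $G_\varphi$ carries $R_A$-valid diagrams to $R_{G_\varphi A}$-valid ones---shows that $\psi'$ is $R_{A'}$-valid. Hence $G_\varphi c=(\psi',A')$ is $T'$-valid. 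The reciprocal implication is the same argument with $S_\bullet$ and $S'_\bullet$ interchanged, which is legitimate because the hypotheses are symmetric in the two theories. By Definition \ref{def:physeq}, $T$ and $T'$ are physically equivalent theories. $\square$
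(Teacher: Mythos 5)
Your argument is correct, and Part 2 coincides with the paper's proof (apply the $\varphi$ from Part 1 to the whole of $c=(\psi,A)$, then use inhomogeneous gauge-invariance of $R_\bullet$ to carry $R_A$--validity of $\psi$ to $R_{A'}$--validity of $\psi'$, with the converse by symmetry). Where you differ is in the execution of Part 1. The paper argues in two stages: first build an $S'_\bullet$--valid $A'$ inducing the same $F$ (by choosing an initial condition realizing $F(.,0)$ and evolving with $S'_\bullet$), and then invoke Remark \ref{prop:gaugeinv} as a black box to obtain $\varphi$ with $A'=G_\varphi A$. You instead construct $\varphi$ slice by slice, inlining the square-closure computation from the proof of Remark \ref{prop:gaugeinv} in one-step form and discharging the compatibility condition by determinism of the shared $\widetilde S_\bullet$ together with gauge-invariance of $F$ and $J$. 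Both routes hinge on the same key facts ($F$ is a complete invariant of the gauge orbit, and $S_\bullet$, $S'_\bullet$ induce the same $\widetilde S_\bullet$), so this is a reorganization rather than a new idea; but your interleaving does buy something: it makes explicit how $\psi'=G_\varphi\psi$ (hence the rule $S'_{\psi'}$ being applied) becomes available exactly when needed, a circular dependence that the paper's two-stage construction leaves implicit, whereas the paper's version is shorter and reuses the remark wholesale. The residual loose ends in your write-up---extending the construction to negative times (where you appeal to reversibility), applying the ``induces $\widetilde S_\bullet$'' property to a partial diagram, and arranging $Z$--validity of $\varphi$ so that \eqref{eq:inhomoggaugeinvariance} applies in Part 2---are at the same level of informality as the paper's own proof, which does not address them either.
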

%Indeed, given ($\psi$, A) a configuration described by $T$ and consider A on its own, it is described by $S_\psi$. But since $S_\psi$ and $S'_{\psi'}$ yields the same $\tilde{S}_J$ (i.e. have the same invariant $F$), there exists a $\varphi$ such that $G_\varphi(A)$ is described by $S'_{\psi'}$ (see sec-\ref{par:gaugefieldinvariant}). Apply $G_\varphi$ to the complete configuration $(\psi, A)$ and you obtain $(\psi',A')$. Now, why would this be described by the theory $T'=W_{A'}\wedge S'_{\psi'}$ ? First, $A'$ is described by $S'_{\psi'}$ by construction. As for $\psi$, it is a solution of $W_A$ which implies that $G_\varphi \psi$ is a solution of $W_{G_\varphi A} = W_{A'}$. 
\begin{proof}
\begin{enumerate}
\item $S_\bullet$ and $S'_\bullet$ induce the same $\widetilde{S}_\bullet$. Giving $A$ an $S_\bullet$--valid field also gives $F$ an $\widetilde{S}_\bullet$--valid field. We can then build $A'$ an $S'_\bullet$-valid field, inducing the same $F$. This is done using an initial condition for $A'$ which gives $F(.,0)$. When evolving with $S'_\bullet$, $F$ will evolve with $\widetilde{S}'_A$. Using remark-\ref{prop:gaugeinv} which says that given $A$ and $A'$ inducing the same $F$, there exists $\varphi$ such that $A'=G_\varphi A$, we prove Fact 1 by applying such a $G_\varphi$ to $c=(\psi,A)$. Thus we have built a $\varphi$ such that $A'$ is an $S'_{\psi'}$--valid spacetime diagram with $(\psi',A') = G_\varphi (\psi, A)$.

\item We can now prove Fact 2. Given $c=(\psi,A)$ a $T$--valid spacetime diagram, consider $A$ on its own. $A$ is an $S_\psi$--valid spacetime diagram. But since $S_\bullet$ and $S'_\bullet$ both implement $\widetilde{S}_\bullet$, there exists $\varphi$ such that $A'=G_\varphi A$ is an $S'_{G_\varphi\psi}$--valid spacetime diagram (Fact 1). Apply this $G_\varphi$ to the whole of $c=(\psi,A)$. This yields some $c'=(\psi',A')$. Is $c'$ a $T'$--valid spacetime diagram? Yes, because: $A'$ is an $S'_{\psi'}$--valid spacetime diagram by construction and since $\psi$ is an $R_A$--valid spacetime diagram,  $\psi'$ is an $R_{A'}$--valid spacetime diagram due to $R_\bullet$ gauge-invariance. Hence $T$ and $T'$ are physically equivalent theories.
\end{enumerate}
\hfill$\square$
\end{proof}

Hence, different dynamics $S_\bullet$ can be used to describe the same physics, and choosing between them is a matter of convenience. To illustrate this point, we refer again to Fig-\ref{fig:gaugeinfluence}. In this figure, we can see that sub-figures \ref{fig:gauge1}, \ref{fig:gauge2}, \ref{fig:gauge5} are physically equivalent, the same gauge-field invariant $F=0$ is used. But sub-figure \ref{fig:gauge2} shows that we do have a degree of freedom on $S_\psi$. 
%Sub-figure \ref{fig:gauge5} again shows that even using the same $S_\psi$, the initial condition on $A$ and $\psi$ need not be the same to give rise to physically equivalent solutions. Sub-figure \ref{fig:gauge3} shows that even with the same $S_\psi$, another physical solution can arise if we use a different initial condition for $F$.
%Notice that in physics sometimes only a partial gauge-fixing is done, i.e. a constraint is imposed on $S_\psi$ that leaves it non-deterministic.

\paragraph{Gauge-constraining : removing redundancies.}
%\xx{Again, now that the symmetry is implemented, there are redundancies which we could try to remove directly in the configuration space, i.e. by keeping a canonical representant of the equivalence class amongst equivalent configurations. Consider $W''$ a CA that acts on $\mathcal{C}$ where some subwords don't appear (technically : sofic subshift). If this $W''$ is physically equivalent (**) to $W'$, then it may be useful as a simpler description of the phenomenon. (Give an example : U'' that works with G that transforms R' in R'' and remark the linkto temporal gauge).}
Now that we fully described our gauge-invariant theory $T$, we find ourselves confronted with its inherent redundancies---the ones arising precisely from the gauge-symmetry we just managed to implement. Indeed, any two $G_\varphi$--related initial configurations, lead to physically equivalent solutions---as shown in Fig-(\ref{fig:gauge1} and \ref{fig:gauge5}). At this stage, and only now that the symmetry has been implemented, we may wish to remove its induced redundancies by suitably restricting the space of configurations. This is usually done by imposing some local constraints directly on the field $c(x,t)$, referred to as gauge-constraining.  However we must keep in mind that constraining $c(x,t)$ could potentially restrict the set of physical solutions available. One must therefore check that a gauge-invariant theory $T$, and its gauge-constrained version $T''$, remain physically equivalent theories in the sense of Def. \ref{def:physeq}.

\section{Conclusion}\label{sec:conclusion}

\noindent {\em Summary.} The paper followed a discrete counterpart to the gauging procedure, which aims to enforce a local symmetry that was judged missing in some physical theory. Here, theories were captured as Cellular Automata (CA), and local symmetries as local transformations $G_\varphi$ of the spacetime diagrams $c$ of these CA. Gauge-invariance was formulated as a concrete condition \eqref{eq:gaugeinvariance}, which directly translates into a local constraint upon the local rule of the theory. It was shown how, starting from a homogeneous non-gauge-invariant theory $R$ over configurations $\psi(.,t)$, one gets to an $A(x,t)$--dependent inhomogeneous gauge-invariant theory $R_\bullet$, and completes this with a $\psi(x,t)$--dependent gauge-invariant theory $S_\bullet$ over configurations $A(.,t)$, in order to finally obtain a homogeneous gauge-invariant theory $T=R_\bullet\cup S_\bullet$ over configurations $c(.,t)=(\psi(.,t),A(.,t))$. The acquired gauge-symmetry then leads to equivalent theories $T'$---equivalent up to the symmetry. A way to go from a $T$ to some equivalent $T'$ is to replace $S_\bullet$ by some $S'_\bullet$ whose spacetime diagrams are $G_\varphi$--related---this is called gauge-fixing. Theory equivalence and gauge-fixing were formalized, the fact that the latter respects the former was proven. Moreover, one can sometimes find an equivalent theory on a reduced configuration space $\widetilde{c}(.,t)$, which can be understood as a canonical representant of $c(.,t)$ under the symmetry---this is called gauge-constraining. 
%\xx{We provided a simple, concrete instance of this, as well as all of the previous notions. The whole discrete gauge-invariance theory has been proved to be the discrete analogous of the modern classical electrodynamics.}    \\

\noindent {\em Motivations.} These were twofold: (i) Porting the gauge theoretical tools and concepts to Computer Science, as methods for constructing nature-inspired CA; providing more accurate schemes for numerical analysis; providing quantum simulation algorithms; making spatially distributed (quantum) computation immune to local errors. (ii) Clarifying the gauge theoretical concepts through the simplicity and rigor brought by Discrete Mathematics; providing the most direct route to its core, i.e. without reference to quantum mechanics and least action principle.

\noindent {\em Related works.} A number of discrete counterparts to physics symmetries have been reformulated in terms of CA, including reversibility, Lorentz-covariance\cite{arrighi2014discrete}, conservations laws and invariants\cite{formenti2011hierarchy}, but no gauge symmetry. To our knowledge the closest work is the colour-blind CA construction\cite{salo2013} which implements a global colour symmetry without porting it to the local scale. However gauge symmetries have been implemented in the one-particle sector of Quantum CA, a.k.a for Quantum Walks. Indeed, one of the authors had followed a similar procedure in order to introduce the electromagnetic gauge field \cite{di2014quantum}, and that of the weak and strong interactions \cite{arnault2016quantum,di2016quantum}. This again was done in the very fabric of the Quantum Walk and the associated symmetry was therefore an intrinsic property of the Quantum Walk. But the gauge field would remain continuous, and seen as an external field.\\
There are, of course, numerous other approaches to space-discretized gauge theories, the main ones being Lattice Gauge Theory\cite{Willson1987} and the Quantum Link Model\cite{chandrasekharan1997quantum}, which were phrased in terms of Quantum Computation--friendly terms through Tensor Networks\cite{rico2014tensor} and can be linked in a unified framework\cite{silvi2014}. Discretized gauge-theories have also arisen from Ising models\cite{silvi2014,wegner1971}.
%The former one was introduced by Wilson in 1974, who suggested a theoretical mechanism based on a spacetime-discretized version of of $SU(2)$ gauge theories. Each link of the lattice is associated with an Euclidian (classical) action, which accounts the gauge field interaction. A QLM provides a complementary approach to LGT and is widely used in quantum simulation. The configuration space of a QLM is still discrete but the Euclidian action is replaced by quantum Hamiltonian operator, and each operator per link lives in an infinite Hilbert space. Last but not least, a TN is a bounded and discrete collection of tensors connected by contractions; originally TNs have been used to simplify in a very elegant diagrammatic way complicated quantum state on a simple regular structure (a regular graph), preserving mostly all the fundamental properties of this state. Although the connection between these three different theories and our RCA deserve a detailed study, we want to stress that in all these previous cases the authors consider 
All of these approaches, however, begin with a well-known continuous gauge theory which is then space-discretized---time is usually kept continuous. An interesting attempt to quantum discretize gauge theories in discrete time, on a general simplicial complex can be found in \cite{kornyak2009discrete}.% Contrarily, our approach is bottom-up and in a gauge-invariant RCA, the covariant formulation of a physical theory is emergent and can be only seen as the result of an intrinsic property of the local rule $T$. 
  
\noindent {\em Perspectives.}  We believe that the hereby developed methodology is ready to be applied to Quantum CA (QCA) \cite{arrighi2008one}, so as to obtain discretized free and interacting Quantum Field Theories \cite{itzykson2006quantum}. Such discretized theories are of interest in Physics especially in non-perturbative theories \cite{strocchi2013introduction}, but they also represent practical assets as quantum simulation algorithms, i.e. numerical schemes that run on Quantum Computers to efficiently simulate interacting fundamental particles---a task which is beyond the capabilities of classical computers. This is ongoing work. 

\section{Acknowledgements}
The authors would like to thank C\'edric B\'eny, Thomas Krajewski, Terry Farrelly and Pablo Arnault, for very instructive conversations about gauge theories. This work was partially supported by the CNRS PEPS JCJC \textsc{GQNet} and the CNRS PEPS D{\'e}fi InFinitTI ``Lattice Quantum Simulation Theory'' \textsc{LaQuST}. 
%\section{Bibliography}
%\bibliographystyle{epts/eptcs}
\bibliographystyle{llncs/splncs}
\bibliography{biblio}
\end{document}